\documentclass[aps,prl,twocolumn,superscriptaddress,nofootinbib,floatfix]{revtex4-2}
\usepackage{amsmath,amssymb,amsthm,bm,graphicx,hyperref,xcolor,soul}
\hypersetup{colorlinks=true,linkcolor=blue!60!black,citecolor=blue!60!black,urlcolor=blue!60!black}

\usepackage{bbm}

\def \beq {\begin{equation}}
\def \eeq {\end{equation}}

\newcommand{\Tr}{\operatorname{Tr}}
\newcommand{\supp}{\operatorname{supp}}
\newcommand{\E}{\mathbb{E}}
\newcommand{\Pn}{\mathcal{P}_n}
\newcommand{\I}{\mathcal{I}}
\newcommand{\Ot}{\mathcal{O}}
\newcommand{\F}{\mathbb{F}}
\newcommand{\1}{\mathbbm{1}}
\newcommand{\Choi}[1]{|#1\rangle\!\rangle}
\def \Oc {\mathcal{O}}
\newtheorem{theorem}{Theorem}
\newtheorem{proposition}{Proposition}
\newtheorem{lemma}{Lemma}

\newtheorem{conjecture}{Conjecture}
\theoremstyle{remark}

\begin{document}

\title{Pauli instability in arbitrary states: detecting magic in physical correlators}

\author{Tanner Jackson}
\affiliation{Department of Physics and Astronomy, University of Kentucky, Lexington, KY 40506, USA}
\author{Alexey Milekhin}
\affiliation{Department of Physics and Astronomy, University of Kentucky, Lexington, KY 40506, USA}
\footnotetext{t.jackson@uky.edu, milekhin@uky.edu}

\begin{abstract}
Measures of non-stabilizerness (``magic'') of a quantum evolution quantify how hard it is to simulate on a classical computer. A natural question is how to extract information about magic from physical correlators, for example at finite temperature or in the ground state.
In this paper we show how to measure magic using out-of-time-order correlators evaluated in an arbitrary state. For every full-rank state, and in particular for every Gibbs state, the resulting quantity vanishes if and only if the evolution is Clifford. We establish its basic properties, relate it to the stabilizer entropy of the Choi state, and prove that, even with a single fixed probe operator, it gives a lower bound on the number of $T$ gates needed to implement the dynamics in any circuit architecture.
\end{abstract}

\maketitle

\emph{Introduction.}---Clifford circuits acting on stabilizer states can be simulated efficiently on a classical computer~\cite{Gottesman1998,AaronsonGottesman2004}; the resource that must be added to reach universality is non-stabilizerness, or \emph{magic}~\cite{BravyiKitaev2005,Veitch2014,HowardCampbell2017}. 
Among the many magic measures, the stabilizer R\'enyi entropies (SRE)~\cite{Leone2022} have recently received a lot of attention for many-body applications because they are expressed through Pauli expectation values and are therefore computable and, in principle, measurable~\cite{Oliviero2022,HaugPiroli2023}. The SRE can be extended from states to dynamics \cite{Dowling2024} via the Choi state leading to a sum over infinite-temperature two-point functions $\Tr[A(t)B]/2^n$, $A(t)\equiv U^\dagger AU$. Why this detects magic is transparent: if $U$ is Clifford, $A(t)$ is again a Pauli string and the normalized correlator is $1$ or $0$ for every pair $(A,B)$.

This construction has an obvious drawback if one wants to detect magic in a laboratory or in a numerical simulation: preparing an infinite temperature state in the Choi form requires a maximally entangled state of $2n$ qubits; moreover numerical evolution of pure states is exponentially faster compared to mixed states. The natural alternative, $\Tr[\rho A(t)B]$ with a thermal $\rho$, fails at the first step: for Clifford $U$ this is the expectation value of a Pauli string in $\rho$, which is generically not $\pm1$ or $0$, so no Clifford/non-Clifford dichotomy is left.

Recently, refs.~\cite{Garcia2023,Garcia2024} proposed to diagnose magic through the out-of-time-ordered correlator (OTOC) through the ``Pauli instability'':
\beq
\I = -\log\E_{A,B \in \Pn} \frac{1}{2^n}|\Tr(A(t) B A(t) B)|, 
\eeq
where $\Pn$ is the n-qubit Pauli group and $\E$ is the uniform average.

\textit{The key observation of the present work is that the mechanism now survives in arbitrary states}: for Clifford $U$ the operator $A(t)BA(t)B$ is $\pm\1$, so the OTOC
\begin{equation}
    \Ot_\rho(A,B)\equiv \Tr\!\big[\rho\,A(t)\,B\,A(t)\,B\big],
    \label{eq:otoc}
\end{equation}
is equal to $\pm 1$, \emph{for every state~$\rho$}. We therefore define, for $\alpha>0$,
\begin{equation}
    \I_\alpha(\rho,U)\equiv-\log\,\E_{A,B\in\Pn}\big|\Ot_\rho(A,B)\big|^\alpha ,
    \label{eq:def}
\end{equation}
Notice that this OTOC has a single insertion of $\rho$, so it is different from the one appearing on the bound on chaos \cite{MSS2016} or thermal operator size \cite{Qi:2018bje,Lucas:2018wsc}. The last fact is especially interesting because a lot of our proofs will be based on the effective operator "thermal size".

In this Letter we study $\I_\alpha(\rho,U)$ analytically. Our main results are the following. (i)~For every state $\rho$, $\I_\alpha(\rho,U)$ is non-negative, bounded by $\simeq2n\log2$, covariant under Clifford unitaries, additive over subsystems, and monotone in $\alpha$ (Proposition~\ref{prop:basic}). (ii)~For every \emph{full-rank} $\rho$, in particular for any Gibbs state, $\I_\alpha(\rho,U)=0$ iff $U$ is Clifford (Theorem~\ref{thm:faithful}); the full-rank condition cannot be dropped. (iii)~At $\alpha=2$ the average over $B$ can be done in closed form, Eq.~\eqref{eq:alpha2}: $e^{-\I_2}$ is a sum of thermal two-point functions $|\langle QA(t)\rangle_\rho|^2$ weighted by the Pauli spectrum of $A(t)$, and at infinite temperature it reduces to the stabilizer entropy of the Choi state $\Choi U$. (iv)~For every circuit made of Clifford gates and $k$ $T$~gates, in \emph{any} arrangement, and every $\rho$,
\begin{equation}
    \I_2(\rho,U)\le k\log 2 ,
    \label{eq:tcount}
\end{equation}
so that $\I_2(\rho,U)/\log2$ is a lower bound on the $T$-count of the dynamics (Theorem~\ref{thm:tcount}). (v)~The same bound holds, and becomes tight, when the average over $A$ is replaced by a single fixed probe operator (Theorem~\ref{thm:fixedA}), which reduces the number of correlators from $16^n$ to $4^n$ and matches the experimental situation of a fixed observable. 
Also, we will consider several examples.
(vi)~We compute $\I_\alpha$ exactly for $T$ gates acting in parallel, Eq.~\eqref{eq:parallelT}, which exhibits how the $Z$-correlations of the state hide magic and motivates the conjecture that the sharp constant in \eqref{eq:tcount} is $\log\frac43$; we bound $\I_\alpha$ by $n\log2$ for the third level of the Clifford hierarchy; we evaluate it for Wigner-matrix dynamics, where it grows as $4t^2$ at early times; and we estimate it numerically in various states for integrable and non-integrable forms of the Heisenberg XXX model. Proofs are collected in the Supplemental Material (SM).

\emph{Basic properties.}---Throughout, $\rho$ is an \emph{arbitrary} density matrix on $n$ qubits, $d=2^n$, and $U$ is any unitary. The following properties hold without any assumption on $\rho$ (SM, Sec. \ref{app:basic}):
\begin{proposition}\label{prop:basic}
For every density matrix $\rho$, every unitary $U$ and every $\alpha>0$:
\begin{enumerate}
\item[(i)] \emph{Range.}
\begin{equation}
    0\le\I_\alpha(\rho,U)\le 2n\log2-\log\big(2-4^{-n}\big) .
    \label{eq:range}
\end{equation}
\item[(ii)] \emph{Dependence on $\alpha$.} For $0<\alpha\le\alpha'$,
\begin{equation}
    \I_\alpha\le\I_{\alpha'}\le\tfrac{\alpha'}{\alpha}\,\I_\alpha .
    \label{eq:alphamono}
\end{equation}
\item[(iii)] \emph{Clifford covariance.} For Clifford unitaries $V_1,V_2$,
\begin{equation}
    \I_\alpha(\rho,\,V_1UV_2)=\I_\alpha(V_2\rho V_2^\dagger,\,U).
    \label{eq:covariance}
\end{equation}
\item[(iv)] \emph{Additivity.} $\I_\alpha(\rho_1\otimes\rho_2,U_1\otimes U_2)=\I_\alpha(\rho_1,U_1)+\I_\alpha(\rho_2,U_2)$.
\item[(v)] \emph{Clifford evolutions are free.} If $U$ is Clifford, $\I_\alpha(\rho,U)=0$.
\end{enumerate}
\end{proposition}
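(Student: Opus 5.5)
The plan is to reduce everything to two elementary facts about the correlator $\Ot_\rho(A,B)=\Tr[\rho\,A(t)BA(t)B]$. First, since $A(t)$ and $B$ are unitary, $A(t)BA(t)B$ is unitary, so $|\Ot_\rho(A,B)|\le1$ for every $\rho$. Second, whenever $A(t)$ commutes or anticommutes with $B$, the product equals $\pm\1$, so $|\Ot_\rho|=1$ exactly.

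\emph{Range (i).} The bound $|\Ot_\rho|\le1$ gives $\E|\Ot_\rho|^\alpha\le1$, hence $\I_\alpha\ge0$. For the upper bound I will drop all terms except a guaranteed set of terms equal to $1$ in modulus. The pairs with $A=\1$ (any $B$) and with $B=\1$ (any $A$) always give $|\Ot|=1$. Counting $\Pn$ modulo phases as $4^n$ elements, there are $2\cdot4^n-1$ such pairs, so
\[
\E|\Ot|^\alpha\ge\frac{2\cdot4^n-1}{16^n}=\frac{2-4^{-n}}{4^n}.
\]
Taking $-\log$ gives exactly the stated bound. I will note that phases in $\Pn$ do not affect $|\Ot|$, so the uniform average over $\Pn$ equals the average over the $4^n$ unsigned strings.

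\emph{Dependence on $\alpha$ (ii).} Write $x=|\Ot|\in[0,1]$. Since $x^{\alpha'}\le x^\alpha$ pointwise, we get $\I_\alpha\le\I_{\alpha'}$. For the other inequality, Jensen (or the power-mean inequality) with the convex map $y\mapsto y^{\alpha'/\alpha}$ applied to $y=x^\alpha$ gives $\E x^{\alpha'}\ge(\E x^\alpha)^{\alpha'/\alpha}$. Taking $-\log$ yields $\I_{\alpha'}\le\frac{\alpha'}{\alpha}\I_\alpha$.

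\emph{Covariance (iii).} For $U\to V_1UV_2$ we have $A(t)\to V_2^\dagger U^\dagger V_1^\dagger AV_1UV_2$. The steps are:
\begin{enumerate}
\item Conjugation by the Clifford $V_1$ permutes $\Pn$ up to phase, so relabelling $A'=V_1^\dagger AV_1$ absorbs $V_1$; the phase drops out of the absolute value.
\item Insert $V_2V_2^\dagger$ around $B$ and use cyclicity of the trace:
\[
\Tr[\rho\,V_2^\dagger XV_2\,B\,V_2^\dagger XV_2\,B]=\Tr[V_2\rho V_2^\dagger\,X B' X B'],\qquad B'=V_2BV_2^\dagger .
\]
\item $B'$ again ranges over $\Pn$ up to phase, and the uniform measure is invariant under this bijection.
\end{enumerate}

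\emph{Additivity (iv) and freeness (v).} For (iv), Pauli strings factor as $A=A_1\otimes A_2$ and $B=B_1\otimes B_2$. The operator inside the trace factors as a tensor product, and the trace against $\rho_1\otimes\rho_2$ factors into a product. The absolute value to the power $\alpha$ and the uniform product measure then factor as well, so the logarithm turns the product into a sum. For (v), if $U$ is Clifford, $A(t)$ is a Pauli string up to phase, so every term satisfies $|\Ot|=1$ and $\I_\alpha=0$.

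I expect no real obstacle. The only points needing care are the phase and counting conventions for $\Pn$, which are what produce the precise constant $2-4^{-n}$ in (i), and the fact that a Clifford acts as a bijection on Pauli strings modulo phases.
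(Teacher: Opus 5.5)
Your proposal is correct and follows the paper's proof essentially step by step: the same count of the $2\cdot4^n-1$ pairs with $A=\1$ or $B=\1$, the same pointwise-monotonicity-plus-Jensen argument for (ii), the same relabelling $A'=V_1^\dagger AV_1$, $B'=V_2BV_2^\dagger$ with cyclicity for (iii), and the same factorization for (iv) and (v). The only difference is cosmetic: you get $|\Ot_\rho|\le1$ directly from unitarity of $A(t)BA(t)B$, whereas the paper uses Cauchy--Schwarz for the form $\Tr[\rho X^\dagger Y]$, a setup it reuses for the equality case in the faithfulness theorem.
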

 Property (iii) says that a Clifford applied after the evolution is free, while one applied before is absorbed into the state; (iv) and (v) are immediate from the definition, (v) being the mechanism explained in the introduction.
The converse of (v) is the central property, and it is the only one that requires an assumption on the state (SM, Sec. \ref{app:faith}):
\begin{theorem}[Faithfulness]\label{thm:faithful}
Let $\rho$ have full rank. Then $\I_\alpha(\rho,U)=0$ if and only if $U$ is a Clifford unitary (up to a phase).
\end{theorem}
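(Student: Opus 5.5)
The ``if'' direction is Proposition~\ref{prop:basic}(v), so only the ``only if'' direction needs proof. The plan is to show that $\I_\alpha=0$ forces every single OTOC to be a pure phase. Full rank of $\rho$ then upgrades each of these phase conditions to an operator identity. Finally, the operator identities force each $A(t)$ to be a Pauli string.

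\emph{Step 1: saturation of every correlator.} For $A,B\in\Pn$ the operator $W_{AB}\equiv A(t)BA(t)B$ is unitary, so $|\Ot_\rho(A,B)|=|\Tr[\rho W_{AB}]|\le 1$. Hence $\E_{A,B}|\Ot_\rho|^\alpha\le1$, and equality, i.e.\ $\I_\alpha=0$, holds only if $|\Tr[\rho W_{AB}]|=1$ for \emph{every} pair $(A,B)$.

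\emph{Step 2: full rank turns a phase condition into an operator identity.} Diagonalize $\rho=\sum_i p_i|i\rangle\langle i|$ with all $p_i>0$. Then
\[
1=\Big|\sum_i p_i\langle i|W|i\rangle\Big|\le\sum_i p_i|\langle i|W|i\rangle|\le 1 .
\]
Equality in the second inequality requires $|\langle i|W|i\rangle|=1$ for every $i$, which for a unitary $W$ means $W|i\rangle=e^{i\theta_i}|i\rangle$. Equality in the triangle inequality, with every weight $p_i$ nonzero, forces all $\theta_i$ to be equal. Therefore $W_{AB}=c_{AB}\1$. This is exactly where full rank is used: a zero eigenvalue would leave $W$ unconstrained on $\ker\rho$, which is why the assumption cannot be dropped.

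\emph{Step 3: $A(t)$ is a Pauli string.} It suffices to take $A,B$ Hermitian Pauli strings, since phases in $\Pn$ only rescale $W_{AB}$ by a unimodular constant. Using $B^2=\1$, the identity from Step 2 reads $BA(t)B=c_{AB}\,A(t)^{-1}=c_{AB}A(t)$, where $A(t)^{-1}=A(t)$ because $A(t)$ is Hermitian and unitary. Expand $A(t)=\sum_P a_P P$ in Pauli strings. Conjugation gives $BPB=\chi(P,B)P$ with $\chi=\pm1$. By linear independence of Pauli strings, $\chi(P,B)=c_{AB}$ for every $P$ in the support of $A(t)$ and every $B$. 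Consequently, for any two strings $P,P'$ in the support, $PP'$ commutes with all Pauli strings, so $PP'\propto\1$ and $P=P'$ up to phase. Thus $A(t)=a P_A$ with $|a|=1$, and Hermiticity gives $a=\pm1$.

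\emph{Step 4: conclusion.} Steps 1--3 show that $U^\dagger A U=\pm P_A$ is a Pauli string for every Hermitian Pauli $A$, in particular for every generator $X_j,Z_j$. Hence $U^\dagger$ normalizes $\Pn$, so $U^\dagger$ and therefore $U$ is Clifford up to a global phase.

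I expect the only genuinely delicate step to be Step 2, the equality analysis that uses full rank. Step 3 needs some care with the phases of the Pauli group, and restricting to Hermitian strings handles that.
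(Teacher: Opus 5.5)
Your proof is correct and follows the paper's argument. Every $|\Ot_\rho(A,B)|$ must saturate to $1$, full rank upgrades this to the operator identity $BA(t)B=c_{AB}A(t)$, and a Pauli-expansion/separation argument then forces $A(t)=\pm P_A$. The differences are cosmetic. In Step 2 you use the eigenbasis of $\rho$ and equality in the triangle inequality to get $W_{AB}=c_{AB}\1$, whereas the paper uses the equality case of Cauchy--Schwarz for the positive-definite form $\Tr[\rho X^\dagger Y]$ with $X=A(t)$, $Y=BA(t)B$. You also obtain $c_{AB}=\pm1$ directly from linear independence rather than by conjugating a second time.
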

 The full-rank assumption cannot be dropped: for $\rho=|0\rangle\langle0|$ and $U=T$ one finds $\I_\alpha=0$, because $T$ acts trivially on its own eigenstate. We note that $\I_\alpha$ is not monotone under stabilizer operations acting on the state alone: for the single-qubit example above, $\I_\alpha(|0\rangle\langle0|,T)=0$ while $\I_\alpha(|{+}\rangle\langle{+}|,T)=\log\frac43$, and $|{+}\rangle=H|0\rangle$. This is expected, since the state is not the resource here.

\emph{Structure at $\alpha=2$.}---
We will mostly work with $\alpha=2$, where the Pauli average over $B$ can be performed in closed form.
Writing $A(t)=\sum_{Q\in\Pn}c_Q(A)\,Q$ we quote the final result
(SM, Sec. \ref{app:Baver}):
\begin{equation}
    \E_B\big|\Ot_\rho(A,B)\big|^2=\sum_{Q}c_Q(A)^2\,\big|\langle Q\,A(t)\rangle_\rho\big|^2 ,
    \label{eq:alpha2}
\end{equation}
where $\langle\,\cdot\,\rangle_\rho=\Tr[\rho\,\cdot\,]$. Thus $e^{-\I_2(\rho,U)}$ is an average over $A$ of thermal two-point functions $\langle QA(t)\rangle_\rho$ of Pauli strings, weighted by the infinite-temperature ``Pauli spectrum'' $c_Q(A)^2$ of $A(t)$. At infinite temperature $\langle QA(t)\rangle_{\1/d}=c_Q(A)$ and
\begin{equation}
    \I_2(\1/d,U)=-\log\E_A\sum_Qc_Q(A)^4=M_2\big(\Choi{U}\big),
    \label{eq:choi}
\end{equation}
the stabilizer 2-R\'enyi entropy of the Choi state~\cite{Leone2022,Dowling2024}. Equation~\eqref{eq:alpha2} is therefore a finite-temperature deformation of the operator SRE, and Eq.~\eqref{eq:alphamono} shows that the same is true of $\I_1$ up to a factor of two.

\emph{Lower bound on the $T$-count.}---Any circuit of Clifford gates and $k$ $T$ gates, with the $T$ gates on arbitrary qubits and arbitrarily interleaved with Cliffords, can be brought to the normal form $U=C\,e^{-i\pi P_k/8}\cdots e^{-i\pi P_1/8}$ with $C$ Clifford and $P_j$ signed Pauli strings, by commuting the Cliffords to the left. Conjugation by $e^{-i\pi P/8}$ leaves a Pauli string commuting with $P$ unchanged and maps an anticommuting one to $(Q-iQP)/\sqrt2$. Hence the support of $A(t)$ lies in the coset $A\langle P_1,\dots,P_k\rangle$ and
\begin{equation}
    N_A\equiv\big|\supp A(t)\big|\le 2^{r}\le 2^k ,
\end{equation}
where $r$ is the rank of $\{P_j\}$ viewed as vectors in $\F_2^{2n}$. Averaging over $A$ we obtain (SM, Sec. \ref{app:op_size}):
\begin{theorem}\label{thm:tcount}
For every density matrix $\rho$ and every Clifford${}+T$ circuit $U$ with $k$ $T$ gates,
\begin{equation}
    \I_2(\rho,U)\;\le\;-\log\E_A\frac{1}{N_A}\;\le\;r\log2\;\le\;k\log2 .
    \label{eq:tcount2}
\end{equation}
If the $k$ Pauli axes $P_j$ are linearly independent over $\F_2$ (in particular $k\le2n$), the bound improves to $\I_2(\rho,U)\le k\log\frac32$.
\end{theorem}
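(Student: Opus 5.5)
Starting from Eq.~\eqref{eq:alpha2}, I bound the thermal two-point functions uniformly and reduce everything to the support size $N_A$ of $A(t)$.

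\emph{Step 1: $\E_B|\Ot_\rho(A,B)|^2\ge 1/N_A$.} By Clifford covariance (Proposition~\ref{prop:basic}(iii)) we may drop $C$: it only relabels $A$, and the $A$-average is invariant. Since $A(t)$ is Hermitian and unitary, $\sum_Q c_Q(A)^2=1$ and $A(t)^2=\1$, so
\begin{equation}
\sum_{Q}c_Q(A)\,\langle Q\,A(t)\rangle_\rho=\langle A(t)A(t)\rangle_\rho=1 .
\end{equation}
The sum runs over the $N_A$ strings $Q\in\supp A(t)$. Cauchy--Schwarz then gives
\begin{equation}
1\le N_A\sum_Q c_Q(A)^2|\langle QA(t)\rangle_\rho|^2 .
\end{equation}
By Eq.~\eqref{eq:alpha2}, the right-hand sum is exactly $\E_B|\Ot_\rho(A,B)|^2$. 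Taking $-\log\E_A$ yields the first inequality, $\I_2\le-\log\E_A N_A^{-1}$, with no assumption on $\rho$.

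\emph{Step 2: the remaining inequalities.} The coset argument preceding the theorem gives $N_A\le 2^r\le 2^k$ for every $A$. Hence $\E_A N_A^{-1}\ge 2^{-r}$, so $-\log\E_A N_A^{-1}\le r\log2\le k\log2$. The only detail to verify is that the coset claim holds inductively. After conjugating by $e^{-i\pi P_j/8}$, each string $Q$ in the support becomes $Q$ or $\{Q,\,QP_j\}$ (up to phases). The support therefore stays inside $A\langle P_1,\dots,P_j\rangle$, which has size at most $2^{\mathrm{rank}}$.

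\emph{Step 3: the improved $k\log\frac32$ bound for independent axes.} This is the main obstacle, because it needs the \emph{distribution} of $N_A$ over $A$, not just its maximum. First I would use Clifford covariance to map the independent $P_j$ to a convenient set; independent Paulis can be Clifford-mapped to e.g.\ $Z_1,X_1,Z_2,X_2,\dots$ or to the $Z_j$ when they commute. Failing that, I would argue directly in the symplectic space.

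The plan is to show the factorization
\begin{equation}
\E_A\frac{1}{N_A}\ge\prod_{j=1}^k\Big(\tfrac12+\tfrac12\cdot\tfrac12\Big)=\Big(\tfrac34\Big)^{k}?
\end{equation}
That would give $\log\frac43$, not $\frac32$, so I need care. The intended route is to process the gates one at a time, with $A$ uniform. Conditioned on the earlier gates, the current support string $Q$ anticommutes with $P_j$ with probability $\frac12$, since linear independence keeps $P_j$ outside the span of the previous axes. Anticommutation doubles the support, and commutation leaves it fixed. However, different support elements can disagree on commutation with $P_j$, so the support may partially grow.

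The cleanest bound I expect is this. The support is a subset of $A\cdot S$, where $S$ is the subspace generated by those $P_i$ ($i\le j$) that "fired". By independence, the fired set is distributed like independent fair coins. Then $N_A\le 2^{\#\text{fired}}$ and Jensen-free averaging gives
\begin{equation}
\E_A 2^{-\#\text{fired}}=\prod_j\Big(\tfrac12+\tfrac12\cdot\tfrac12\Big)=\Big(\tfrac34\Big)^k .
\end{equation}
This yields $\I_2\le k\log\frac43\le k\log\frac32$, with the constant $\frac32$ covering the looser bookkeeping. The key point to verify is the independence of the firing events. I would establish it by noting that the map $A\mapsto(\langle A,P_j\rangle)_j$, with the symplectic form evaluated on the relevant (possibly partially updated) representative, is a surjective linear map $\F_2^{2n}\to\F_2^k$ precisely when the $P_j$ are independent. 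Its outputs are therefore uniform and independent under uniform $A$.
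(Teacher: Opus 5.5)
Your Steps 1 and 2 are correct and match the paper's argument: the sum rule $\sum_Q c_Q(A)\langle QA(t)\rangle_\rho=1$, Cauchy--Schwarz over the $N_A$ support strings, and the coset bound $N_A\le2^r$.

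Step 3, however, rests on a false claim: the ``firing'' events are not independent fair coins. Suppose an axis $P_i$ has split a string $b$ into $b$ and $b+v_i$. Any axis $P_j$ processed afterwards with $\omega(v_i,v_j)=1$ anticommutes with exactly one of the two, so gate $j$ fires with certainty rather than with probability $\tfrac12$. Surjectivity of $a\mapsto(\omega(a,v_j))_j$ only controls how the \emph{initial} string $a$ commutes with the axes. The branches $a+\sum_{i\in I}v_i$ are shifted differently, and there is no single ``representative'' on which to evaluate it.

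No choice of definition can rescue $\E_AN_A^{-1}\ge(3/4)^k$. Take $n=1$ and $U=e^{-i\pi X/8}e^{-i\pi Z/8}$, which is $HTHT$ up to phase, with independent axes $Z,X$. Direct conjugation gives $N_{\1}=1$, $N_X=2$, $N_Y=N_Z=3$, so $\E_AN_A^{-1}=\tfrac{13}{24}<\tfrac{9}{16}$. Two of the four probes would need both gates to fire, which two fair independent coins cannot produce.

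For this $U$ the infinite-temperature value is $\I_2=2\log\tfrac43<\log\tfrac{24}{13}=-\log\E_AN_A^{-1}$. So the $N_A$ chain cannot deliver $\log\tfrac43$ per gate even in principle; the paper obtains that constant at infinite temperature (Theorem~\ref{thm:tcount_tight}) by an entirely different peeling argument. Consequently your $k\log\tfrac43$ does not follow, and the $k\log\tfrac32$ you deduce from it is left unproven. The warning sign was already in your own computation, where the factorization produced $\tfrac43$ instead of the stated $\tfrac32$.

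The missing idea is to control the first moment of $N_A$ rather than look for a product structure.
\begin{itemize}
\item Following every branch, $N_A$ is at most the number of index sets $I=\{i_1<\dots<i_m\}\subseteq[k]$ that are admissible for $a$, meaning $\omega\big(a+\sum_{i\in I,\,i<i_l}v_i,\,v_{i_l}\big)=1$ for $l=1,\dots,m$.
\item For a \emph{fixed} $I$ this is an affine system in $(\omega(a,v_{i_1}),\dots,\omega(a,v_{i_m}))$. Your surjectivity observation, applied to that $I$, shows that exactly a fraction $2^{-m}$ of all $a$ solve it, whatever the shifts $\sum_i\omega(v_i,v_{i_l})$.
\item Linearity of expectation gives $\E_AN_A\le\sum_m\binom km2^{-m}=(3/2)^k$.
\item Jensen's inequality $\E_AN_A^{-1}\ge(\E_AN_A)^{-1}$ then yields $\I_2(\rho,U)\le k\log\tfrac32$.
\end{itemize}
This is the paper's proof. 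The constant $\tfrac32$ is what this route genuinely gives, not a loosening of a $\tfrac43$ bound.
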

For small values of $k$, one can exhaustively check all possible dependency relations among the Pauli axes $P_j$ and find that $\I_2 (\rho,U) \leq k \log \frac{3}{2}$ holds generally for $k \leq 7$. In addition, nothing in the argument uses the angle $\pi/8$: Eq.~\eqref{eq:tcount2} holds for circuits with $k$ Pauli rotations $e^{i\theta_jP_j}$ of arbitrary angles. 

The $k\log2$ in \eqref{eq:tcount} is not the best constant. For $k$ $T$ gates acting in parallel on distinct qubits (SM, Sec. \ref{app:Tpar}),
\begin{equation}
    \I_\alpha\big(\rho,T^{\otimes k}\otimes\1\big)=k\log\tfrac43-\log\sum_{S\subseteq[k]}3^{-|S|}\,\big|\langle Z_S\rangle_\rho\big|^\alpha ,
    \label{eq:parallelT}
\end{equation}
where $Z_S=\prod_{i\in S}Z_i$ and the $S=\emptyset$ term equals~$1$. Hence $\I_\alpha(\rho,T^{\otimes k})\le k\log\frac43$, with equality iff all $\langle Z_S\rangle_\rho$, $S\neq\emptyset$, vanish (e.g.\ at infinite temperature), and $\I_\alpha=0$ iff the $k$ qubits are in a computational-basis product state. Equation~\eqref{eq:parallelT} exhibits explicitly how a finite-temperature state reduces the visible magic through its $Z$-correlations.

Refs. \cite{Garcia2024} only considered special arrangements of $T$ gates above. Here we prove a bound for any architecture. 
\begin{theorem}[Tight bound for maximally mixed state]\label{thm:tcount_tight}
For every Clifford${}+T$ circuit $U$ with $k$ $T$ gates,
\begin{equation}
    \I_2(\1/d,U) \le k \log \tfrac{4}{3}
    \label{eq:tcount43}
\end{equation}
\end{theorem}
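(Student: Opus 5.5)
The plan is to work entirely at infinite temperature, where by Eq.~\eqref{eq:choi} the claim is equivalent to
\begin{equation*}
W(U)\equiv \E_A\sum_Q c_Q(A)^4\;\ge\;(3/4)^k .
\end{equation*}
I would argue by induction on $k$, peeling off one Pauli rotation at a time.

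First I use the normal form $U=C\,R_k\cdots R_1$ with $R_j=e^{-i\pi P_j/8}$. By Clifford covariance, Proposition~\ref{prop:basic}(iii) with $\rho=\1/d$ (which is invariant), the Clifford $C$ can be dropped. The base case $k=0$ is Proposition~\ref{prop:basic}(v). For the inductive step, write $U=V R$, where $R=e^{-i\pi P/8}$ is the rotation acting first, so $A(t)=R^\dagger V^\dagger A V R$.

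It is more convenient to peel from the side where the Pauli average sits. I would use $W(U)=W(U^\dagger)$, which follows from the symmetry of the Choi-state SRE under exchanging input and output legs: $\sum_{A,Q}\Tr(U^\dagger AUQ)^4$ is symmetric. Equivalently, I write $U=R'V'$ with $R'=e^{-i\pi P'/8}$ applied last, so that $A(t)=V'^\dagger R'^\dagger A R' V'$.

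Next I split the uniform average over $A$ according to whether $A$ commutes with $P'$.
\begin{enumerate}
\item[(a)] If $A$ commutes with $P'$, then $A(t)=V'^\dagger AV'$ is unchanged.
\item[(b)] If $A$ anticommutes with $P'$, then $A(t)=(X-Y)/\sqrt2$, where $X=V'^\dagger AV'$ and $Y=V'^\dagger(iAP')V'$ are two orthogonal Hermitian operators.
\end{enumerate}
In case (b), I pair $A$ with its partner $iAP'$, which is again anticommuting and maps to the "rotated" combination. For each $Q$, with $x=c_Q(X)$ and $y=c_Q(Y)$, the pair contributes
\begin{equation*}
\tfrac14\big[(x+y)^4+(x-y)^4\big]=\tfrac12(x^4+y^4)+3x^2y^2 .
\end{equation*}
This yields
\begin{equation*}
W(U)\ \ge\ \tfrac12\,W_{\rm c}(V')+\tfrac14\,W_{\rm a}(V')+3\,\Xi ,
\end{equation*}
where $W_{\rm c}$ and $W_{\rm a}$ are the (normalized) restrictions of the $A$-average defining $W(V')$ to $A$ commuting, respectively anticommuting, with $P'$, so that $W(V')=\tfrac12(W_{\rm c}+W_{\rm a})$. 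The quantity $\Xi\ge0$ is the averaged cross term $\E\sum_Q x^2y^2$.

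If $W_{\rm c}\ge W_{\rm a}$, this immediately gives $W(U)\ge\tfrac34 W(V')$, and the induction closes.

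The main obstacle is the opposite case $W_{\rm a}>W_{\rm c}$, where the naive per-pair bound only gives the factor $\tfrac12$. This cannot be avoided pair by pair, since the minimum of the pair contribution relative to $x^4+y^4$ is $\tfrac12$, attained at $y=0$. My first attempt would be to exploit the freedom of peeling from either end: by the $U\leftrightarrow U^\dagger$ symmetry, one may remove either the first or the last rotation, and one can also conjugate by Cliffords to move $P'$ to $Z_1$.

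Failing that, I would strengthen the induction hypothesis to a two-component statement. For every Pauli $P$, I would control both restricted averages, showing $W_{\rm c}(V)\ge(3/4)^{k}$ and $\tfrac12 W_{\rm a}(V)+3\Xi_P(V)\ge \tfrac12(3/4)^{k}$. I would then check that the pair map $(x,y)\mapsto((x\pm y)/\sqrt2)$ preserves this pair of inequalities; the cross term $\Xi$ is exactly what must compensate when $W_{\rm a}$ dominates.

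As a consistency check, the parallel-$T$ formula~\eqref{eq:parallelT} at $\rho=\1/d$ saturates $(3/4)^k$ with $W_{\rm c}=1$, $W_{\rm a}=\tfrac12\cdot$(contribution), which suggests the extremal case is the one where $W_{\rm c}\ge W_{\rm a}$ and the cross term vanishes.
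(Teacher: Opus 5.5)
Your bookkeeping is correct, and it is the mirror image of the paper's proof. The paper peels the $T$ gate applied first and, for each fixed $A$, pairs the output strings $X_1Q_r,Y_1Q_r$. You instead peel the rotation applied last and pair the inputs $A$, $iAP'$. Both routes produce $\tfrac12x^4+\tfrac12y^4+3x^2y^2$, both discard the $3x^2y^2$ term, and both then need exactly one inequality, $W_{\mathrm c}(V')\ge W_{\mathrm a}(V')$.

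The gap is that you never prove this inequality and instead treat its failure as the main obstacle. It is the entire nontrivial content of the theorem, and your fallbacks do not supply it. Peeling from the other end just transfers the same question to the other leg of the Choi state. The strengthened two-component hypothesis is asserted to be preserved but not checked, and there is no evident reason it should be. When the new axis $P'$ anticommutes with the axis $P$ in the hypothesis, $\chi(iAP',P)=-\chi(A,P)$, so the pair map mixes the two classes over which $W_{\mathrm c}$ and $W_{\mathrm a}$ are defined.

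The missing idea is that the case $W_{\mathrm a}>W_{\mathrm c}$ never occurs, for any unitary $V'$ and any $P'\neq\1$. Using $c_Q(M)^4=d^{-4}\Tr[Q^{\otimes4}M^{\otimes4}]$,
\[
\tfrac{4^n}{2}\big(W_{\mathrm c}-W_{\mathrm a}\big)=\sum_{A}\chi(A,P')\sum_Q c_Q(V'^\dagger AV')^4=\Tr\Big[\Pi\,(V'^{\dagger})^{\otimes4}\,\Pi_{P'}\,V'^{\otimes4}\Big],
\]
with $\Pi=d^{-2}\sum_QQ^{\otimes4}$ and $\Pi_{P'}=d^{-2}\sum_A\chi(A,P')A^{\otimes4}=(P'\otimes\1^{\otimes3})\,\Pi\,(P'\otimes\1^{\otimes3})$.

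$\Pi$ is a projector because $\big(\sum_QQ^{\otimes4}\big)^2=\sum_{Q,Q'}(QQ')^{\otimes4}=d^2\sum_QQ^{\otimes4}$, since the phase of $QQ'$ has unit fourth power. Hence its unitary conjugate $\Pi_{P'}$ is also a projector, and the trace of a product of two projectors is non-negative. This is precisely the paper's ``overlap of two projectors'' step, with $s_1(Q)=\chi(Q,Z_1)$ in its orientation.

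With this lemma your first case is the only case, so $W(U)\ge\tfrac34W(V')$ holds unconditionally. The induction then closes with no strengthened hypothesis. The sum over all $A$ is what makes this work, consistent with your own observation that no pair-by-pair bound better than $\tfrac12$ exists.
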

This theorem is proven in (SM, Sec. \ref{app:proof_of_infT}). By virtue of eq. (\ref{eq:choi}) this provides a tight bound on the stabilizer R\'enyi entropy.
It would be very interesting to generalize this bound for arbitrary $\rho$, however we have not managed to do that. Hence we formulate it as a conjecture:
\begin{conjecture}\label{conj}
For every $\rho$ and every Clifford${}+T$ circuit with $k$ $T$ gates, $\I_2(\rho,U)\le k\log\frac43$.
\end{conjecture}
We have numerically analyzed this inequality for thousands of circuits with adversarially optimized states without finding any violations. 

\emph{Fixed probe operator.}---The double Pauli average in \eqref{eq:def} involves $16^n$ correlators, the square of the cost of a single Pauli average, and in an experiment one typically has access to a fixed probe operator $A$ rather than to all of them. It is therefore natural to define
\begin{equation}
\begin{split}
    \I_2(\rho,U;A)&\equiv-\log\E_B\big|\Ot_\rho(A,B)\big|^2\\
    &=-\log\sum_Qc_Q(A)^2\big|\langle QA(t)\rangle_\rho\big|^2 ,
\end{split}
    \label{eq:fixedA}
\end{equation}
so that $e^{-\I_2(\rho,U)}=\E_A\,e^{-\I_2(\rho,U;A)}$. The results above descend to a single probe, and the $T$-count bound becomes sharp (SM, Sec. \ref{app:op_size}):
\begin{theorem}[Fixed probe]\label{thm:fixedA}
Let $\rho$ be any density matrix and $A\in\Pn$ any probe.
\begin{enumerate}
\item[(i)] $0\le\I_2(\rho,U;A)\le\log N_A$, where $N_A=|\supp U^\dagger AU|$. If $\rho$ has full rank, $\I_2(\rho,U;A)=0$ iff $U^\dagger AU$ is a Pauli string.
\item[(ii)] If $U$ is a Clifford${}+T$ circuit with $k$ $T$ gates whose axes have $\F_2$-rank $r$, then
\begin{equation}
    \I_2(\rho,U;A)\le r\log2\le k\log2 .
    \label{eq:fixedAbound}
\end{equation}
\item[(iii)] The bound is tight: for $U=T^{\otimes k}$ and $A=X^{\otimes k}$,
\begin{equation}
    e^{-\I_2(\rho,U;A)}=2^{-k}\sum_{S\subseteq[k]}\big|\langle Z_S\rangle_\rho\big|^2 ,
    \label{eq:fixedAtight}
\end{equation}
so that $\I_2(\rho,U;A)=k\log2$ at infinite temperature and, more generally, whenever $\langle Z_S\rangle_\rho=0$ for all non-empty $S\subseteq[k]$.
\end{enumerate}
\end{theorem}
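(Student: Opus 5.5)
Throughout, write $\tilde A\equiv A(t)=U^\dagger AU=\sum_Q c_Q Q$. Since $\tilde A$ is unitary and Hermitian, Parseval gives $\sum_Q c_Q^2=1$, and the $c_Q$ are real. We take Eq.~\eqref{eq:alpha2} as given, so that $e^{-\I_2(\rho,U;A)}=\sum_Q c_Q^2|\langle Q\tilde A\rangle_\rho|^2$ with $|\langle Q\tilde A\rangle_\rho|\le\|Q\tilde A\|=1$.

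\emph{Part (i).} The upper bound $e^{-\I_2}\le\sum_Q c_Q^2=1$ gives $\I_2\ge0$. For the lower bound on $e^{-\I_2}$, note that $\sum_Q c_Q\langle Q\tilde A\rangle_\rho=\langle\tilde A^2\rangle_\rho=1$. Apply Cauchy--Schwarz over the $N_A$ terms of the support:
\[
1=\Big|\sum_{Q\in\supp}c_Q\langle Q\tilde A\rangle_\rho\Big|^2\le N_A\sum_Q c_Q^2|\langle Q\tilde A\rangle_\rho|^2 .
\]
Hence $\I_2\le\log N_A$.

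For faithfulness, suppose $\I_2=0$. Then $\sum_Q c_Q^2\big(1-|\langle Q\tilde A\rangle_\rho|^2\big)=0$, so every $Q$ in the support satisfies $|\langle Q\tilde A\rangle_\rho|=1$. The operator $W=Q\tilde A$ is unitary, so it is normal with eigenvalues on the unit circle. Writing $\langle W\rangle_\rho=\sum_i p_i\langle\psi_i|W|\psi_i\rangle$ over the spectral decomposition of $\rho$, each term has modulus at most $1$. Modulus $1$ forces every $|\psi_i\rangle$ to be an eigenvector of $W$ with one common eigenvalue $e^{i\phi}$. Full rank means the $|\psi_i\rangle$ span the whole space, so $W=e^{i\phi}\1$ and $\tilde A=e^{i\phi}Q$ is a (phased) Pauli string. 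Conversely, if $\tilde A=\pm Q$ then the sum reduces to one term of modulus $1$, giving $\I_2=0$.

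\emph{Part (ii).} We use the normal form $U=C\,e^{-i\pi P_k/8}\cdots e^{-i\pi P_1/8}$ and the coset argument preceding Theorem~\ref{thm:tcount}. Since $C$ is Clifford, $C^\dagger AC$ is a single Pauli string. Each subsequent conjugation maps a Pauli string $Q$ to a combination of $Q$ and $QP_j$. Therefore $\supp\tilde A\subseteq A'\langle P_1,\dots,P_k\rangle$, up to phases, whose size is $2^r$. Combining with (i) gives $\I_2\le\log N_A\le r\log2\le k\log2$.

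\emph{Part (iii).} This is a direct computation. We have $T^\dagger XT=(X+Y)/\sqrt2$, up to the sign convention for $Y$. Hence $\tilde A=\bigotimes_j (X_j+Y_j)/\sqrt2$, with $2^k$ strings each carrying $c_Q^2=2^{-k}$. Let $S$ be the set of sites carrying $Y$. Then $X_j(X_j+Y_j)=\1+iZ_j$ and $Y_j(X_j+Y_j)=\1-iZ_j$, so
\[
Q\tilde A=2^{-k/2}\prod_{j\in S}(\1-iZ_j)\prod_{j\notin S}(\1+iZ_j).
\]
Expanding gives $\langle Q\tilde A\rangle_\rho=2^{-k/2}\sum_{R\subseteq[k]}i^{|R|}(-1)^{|R\cap S|}\langle Z_R\rangle_\rho$.

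Summing $|\cdot|^2$ over $S$ uses the characters $(-1)^{|R\cap S|}$ of $\F_2^k$. Their orthogonality, $\sum_S(-1)^{|R\cap S|+|R'\cap S|}=2^k\delta_{RR'}$, kills the cross terms. What remains is $2^{-k}\cdot2^{-k}\cdot2^k\sum_R|\langle Z_R\rangle_\rho|^2$, which is Eq.~\eqref{eq:fixedAtight}. If all nonempty $\langle Z_S\rangle_\rho$ vanish, only $R=\emptyset$ survives, giving $2^{-k}$ and hence $\I_2=k\log2$, saturating (ii) with $r=k$.

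The main obstacle is the eigenvector argument in the faithfulness claim, where full rank enters and the common phase must be handled. The remaining care point is the phase bookkeeping in (iii): the phases $i^{|R|}$ must not spoil the character orthogonality. They do not, because those phases are independent of $S$.
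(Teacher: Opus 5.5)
Your proof is correct. The bounds in (i) and the argument for (ii) match the paper: the sum rule $\sum_Q c_Q m_Q=1$ with Cauchy--Schwarz over the support, and the coset $(C^\dagger AC)\langle P_1,\dots,P_k\rangle$.

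Your faithfulness argument and your computation in (iii) take different routes from the paper. For faithfulness, the paper does not use the closed form \eqref{eq:alpha2}. It notes that $\E_B|\Ot_\rho(A,B)|^2=1$ forces $|\Ot_\rho(A,B)|=1$ for every $B$. It then applies equality in Cauchy--Schwarz for the positive-definite form $\Tr[\rho X^\dagger Y]$ (Lemma~\ref{lem:innerproduct}) to $X=A(t)$, $Y=BA(t)B$, obtaining $BA(t)B=\pm A(t)$. Finally it needs Lemma~\ref{lem:separate}, that distinct strings are separated by some $B$, to collapse the support to one string. You instead use \eqref{eq:alpha2} to get $|\langle Q\tilde A\rangle_\rho|=1$ on the support. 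Your spectral argument is then the same equality-in-Cauchy--Schwarz fact applied to the pair $(Q,\tilde A)$, and it yields $\tilde A\propto Q$ in one step with no separation lemma. The trade-off is that the paper's argument works verbatim for any $\alpha$ and is shared with Theorem~\ref{thm:faithful}, while yours is tied to $\alpha=2$.

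For (iii), the paper computes each OTOC directly. $A(t)BA(t)B$ is a phase times $Z_S$, with $S$ the set of $T$-qubits where $b_i\in\{X,Y\}$, so $|\Ot_\rho(X^{\otimes k},B)|=|\langle Z_S\rangle_\rho|$. The uniform $B$-average then weights each $S$ by $2^{-k}$. This identifies every individual correlator and gives the same formula with $|\cdot|^\alpha$ for any $\alpha$. Your route through the Pauli spectrum of $\tilde A$ and character orthogonality on $\F_2^k$ is valid, since the $S$-independent phases $i^{|R|}$ do drop out, but it is more roundabout.

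Two minor points. With $T=\mathrm{diag}(1,e^{i\pi/4})$ one has $T^\dagger XT=(X-Y)/\sqrt2$; this only flips $i\to-i$ and some overall signs in your expansion. Also, Hermiticity of $\tilde A$ pins your phase $e^{i\phi}$ to $\pm1$.
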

 In contrast to the averaged quantity, the constant $\log2$ per $T$ gate is the best possible for a fixed probe, whereas Conjecture~\ref{conj} asserts $\log\frac43$ for the average: the factor $\frac34$ reflects the fact that a $T$ gate with axis $P$ leaves the half of all Pauli strings commuting with $P$ untouched, a property of the average over $A$ and not of individual probes. The single average over $B$ can of course also be sampled, as discussed above.

\emph{Clifford hierarchy.}---
In this section we ask how much Pauli instability can change under the application of a \textit{single} unitary in the third level of the Clifford hierarchy~\cite{GottesmanChuang1999}. 
Unitaries in the third level, $U^\dagger\Pn U\subseteq\mathcal C_n$ (e.g.\ $T$, CCZ, Toffoli), are special because $V_A\equiv A(t)$ is then a Hermitian Clifford and $\Ot_\rho(A,B)=\langle V_ABV_AB\rangle_\rho$ is a Pauli expectation value. Writing $V_A$ as a symplectic matrix $M_A\in\mathrm{Sp}(2n,\F_2)$, $M_A^2=\1$ implies $(M_A-\1)^2=0$, so $M_A$ fixes at least $2^n$ Pauli strings up to sign, each contributing $|\Ot_\rho|=1$. This gives (SM, Sec. \ref{app:third})
\begin{equation}
    \I_\alpha(\rho,U)\le n\log2-\log\big(1+2^{-n}-4^{-n}\big)
    \label{eq:third}
\end{equation}
for all $\rho$, all $\alpha$ and all $U$ in the third level: such gates carry at most half of the maximal Pauli instability~\eqref{eq:range}.

\emph{Random-matrix dynamics.}---As an example of chaotic Hamiltonian evolution we take $U(t)=e^{-iWt}$ with $W$ an $N\times N$ Wigner matrix, $N=2^n$, normalized to a semicircle on $[-2,2]$. Cipolloni, Erd\H{o}s and Henheik~\cite{Cipolloni2024} computed the OTOC of deterministic observables to leading order in $N$: for traceless Pauli strings $A\neq B$ at infinite temperature,
\begin{equation}
    \Ot_{\1/N}(A,B)=\chi(A,B)\,\varphi(t)^4+O(\varepsilon),\quad \varphi(t)=\frac{J_1(2t)}{t},
\end{equation}
with $\varepsilon\simeq(t^4)/N$ and $J_1$ the Bessel function. The Pauli average is then immediate (SM, Sec.~\ref{app:wigner}):
\begin{equation}
    \I_2\big(\1/N,U(t)\big)=-8\log|\varphi(t)|+O\big(\varphi^{-4}\varepsilon,\,\varphi^{-8}N^{-2}\big).
    \label{eq:wigner}
\end{equation}
Magic grows as $\I_2\simeq4t^2$ at early times and, since $|\varphi(t)|\sim t^{-3/2}$, logarithmically, $\I_2\simeq12\log t$, at intermediate times, until Eq.~\eqref{eq:range} is approached. Remarkably, the leading-order result is the same for every pair $(A,B)$, so the Pauli \emph{average} is not needed for random-matrix dynamics. In the Gibbs state $\rho_\beta\propto e^{-\beta W}$ the real part of the OTOC is known~\cite{Cipolloni2024}, $\mathrm{Re}\,\Ot_{\rho_\beta}(A,B)=\chi(A,B)\varphi(t)^3\,\mathrm{Re}\,\varphi(t+i\beta)/\varphi(i\beta)+\dots$, which yields the upper bound $\I_2(\rho_\beta,U(t))\le-\log\E_{A,B}(\mathrm{Re}\,\Ot)^2$, in particular $\I_2\le C(\beta)t^2$ at early times with $C(\beta)=7-6I_0(2\beta)/[\beta I_1(2\beta)]+6/\beta^2$ interpolating between $C(0)=4$ and $C(\infty)=7$.

\emph{Heisenberg XXX model.}---In our final example, we analyze the Pauli instability of the time evolution operator $U(t)=e^{-iHt}$ for integrable and non-integrable forms of the Heisenberg XXX spin-chain model. Defined in terms of nearest-neighbor (NN) interactions, the integrable $n$ qubit XXX model is given by
\begin{equation}
    H_{\text{int}} = J_1 \sum_{i=1}^n \vec{\sigma}_i \cdot \vec{\sigma}_{i+1}
\end{equation}
where $\vec{\sigma}_i = (X_i,Y_i,Z_i)$ denotes the Pauli operators which act on the $i$-th qubit, $J_1$ is the NN coupling strength, and there are periodic boundary conditions (PBC) such that $\vec{\sigma}_{n+i} = \vec{\sigma}_i$. The integrability of this system can be broken by introducing a deformation term $H_{\text{non-int}} = H_{\text{int}} + H_{\text{def}}$. One particular choice, often referred to as the $J_1 - J_2$ model, is to introduce next-nearest-neighbor (NNN) interactions
\begin{equation}
    H_{\text{def}} = J_2 \sum_{i=1}^n \vec{\sigma}_i \cdot \vec{\sigma}_{i+2}
\end{equation}
where $J_2$ is the NNN coupling strength and there is once again PBC. We chose to focus on the antiferromagnetic case ($J_1,J_2 > 0$) and work in the vicinity of the integrable case, setting $J_1 = 1$ and $J_2 = 1.1$. As depicted in Figure~\ref{fig:single_probe}, we plot $\I_2 (\rho,U;A)$ for a fixed probe $A$ as a function of time. The graphs $(i)$, $(ii)$, and $(iii)$ present the results for the maximally mixed state $\1 /d$, Gibbs state $e^{-\beta H} /Z$, and ground state $\left| \psi_0 \rangle \langle \psi_0 \right|$ respectively. Since $k \geq \I_2 (\rho,U;A)/\log 2$, we can take the maximum $\I_2 (\rho,U;A)$ found in each case and place a lower bound on the number of $T$ gates needed to simulate the dynamics of $U$ in the given state $\rho$. For each state, the result for the integrable and non-integrable models is the same: we find $k \geq 9$, $k \geq 5$, and $k \geq 4$ respectively. Interestingly, we observe larger fluctuations in the non-integrable case compared to the integrable one. 
For the standard OTOC with fixed operators 
the opposite situation has been observed.
\cite{Fortes_2019,otoc_residual}.
\begin{figure}[htbp]
    \centering
    \includegraphics[width=\columnwidth]{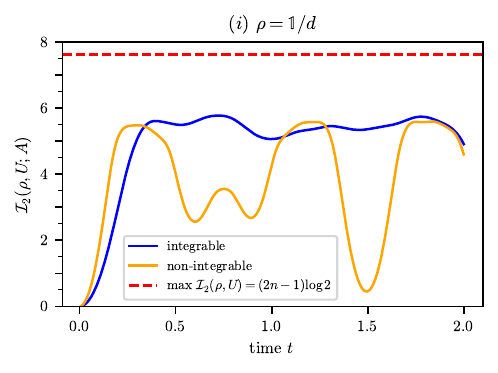}
    \includegraphics[width=\columnwidth]{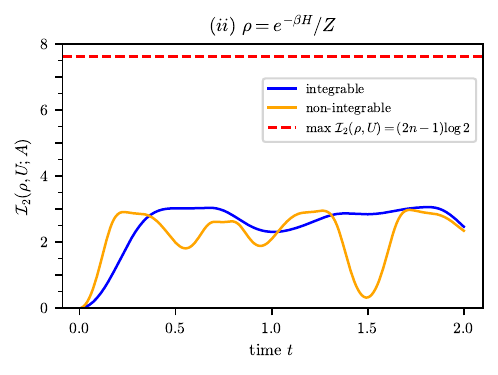}
    \includegraphics[width=\columnwidth]{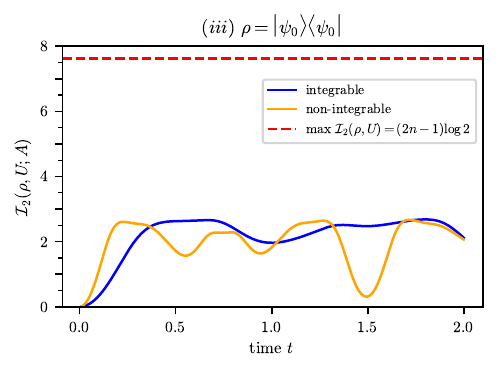}
    \caption{\emph{Estimate of magic via fixed probe operator.} These graphs were produced for $n=6$ qubits and the randomly chosen fixed probe $A = XIZYZZ$. Note also that $J_1 = 1$ and $J_2 = 1.1$. For the Gibbs state, we set $\beta = 1$. }
    \label{fig:single_probe}
\end{figure}

\emph{Discussion.}---We have shown that the Pauli instability, unlike the two-point stabilizer entropy, remains a faithful and computable measure of the magic of dynamics when the infinite-temperature trace is replaced by an expectation value in an arbitrary full-rank state, and that it bounds the $T$-count of the dynamics in any state. Several directions suggest themselves. The bound~\eqref{eq:tcount2} depends on $\rho$ only through the thermal two-point functions in~\eqref{eq:alpha2}, so different states can be used as different witnesses of the same $T$-count, and one may optimize over $\rho$. It is possible to find a counter-example to the naive guess that it is always maximized for $\rho \propto \1$. Conjecture~\ref{conj}, if established, would make the finite-temperature witness as sharp as the infinite-temperature one.
OTOC correlators were originally introduced as probes into scrambling \cite{Shenker:2013pqa,Hosur:2015ylk} and, under certain conditions, quantum chaos \cite{Larkin:1969abc,Kitaev:2014t1,Xu:2019lhc,Trunin:2023rwm,otoc_residual}, so it would be interesting to connect our results to the existing links \cite{Leone:2021oqe,Ahmadi:2022bkg} between magic and chaos, as well as the emergent notions of entanglement "in time" \cite{Brukner:2004egd,Fullwood:2022rjd,Cotler:2017anu,Doi:2023zaf,Milekhin:2025ycm}.
Given that the quantity we proposed can be computed in a finite-temperature state suggests several natural physical questions.
Finite temperature correlation functions often admit hydrodynamic representations at late times and it would interesting to understand if magic plays any role in that description or if there is a hydrodynamic description for the propagation of magic. Holographic systems typically admit gravitational duals only at low temperatures and it would be interesting to understand how to compute Pauli instability on the gravity side. Recently it was argued that magic is necessary for describing gravitational backreaction is these systems \cite{Cao:2023mzo,Cao:2024nrx,Cao:2026uoq}. Sachdev--Ye--Kitaev (SYK) model \cite{SachdevYe,kitaevfirsttalk,ms} can be a natural starting point.

\begin{acknowledgments}
\emph{Acknowledgments.}
We are grateful to Charles~Cao, Xuchen~Cao, Al~Shapere for comments.
\end{acknowledgments}

\bibliographystyle{apsrev4-2}
\bibliography{pauli_instability_paper}

\clearpage
\onecolumngrid
\begin{center}
{\large\bfseries Supplemental Material for ``Pauli instability in arbitrary states: detecting magic in physical correlators''}
\end{center}
\setcounter{equation}{0}
\setcounter{section}{0}
\renewcommand{\theequation}{S\arabic{equation}}
\renewcommand{\thesection}{S\arabic{section}}
\setcounter{secnumdepth}{2}

Throughout, $n$ is the number of qubits, $d=2^n$, $\Pn$ is the Pauli group for $n$ qubits, $A(t)\equiv U^\dagger AU$, $\Ot_\rho(A,B)=\Tr[\rho A(t)BA(t)B]$, and $\chi(Q,B)=\pm1$ according to whether $Q$ and $B$ commute or anticommute. We use repeatedly that $\frac1d\Tr(PQ)=\delta_{PQ}$ for $P,Q\in\Pn$, and that for a fixed non-identity $R\in\Pn$ exactly half of the strings $B\in\Pn$ anticommute with $R$, so that $\E_B\,\chi(R,B)=\delta_{R,\1}$ and hence $\E_B\,\chi(Q,B)\chi(Q',B)=\E_B\chi(QQ',B)=\delta_{QQ'}$.

\section{Elementary bounds and dependence on $\alpha$}\label{app:basic}

\begin{proof}[Proof of Proposition~\ref{prop:basic}]

 $0\le|\Ot_\rho(A,B)|\le1$, hence $\I_\alpha(\rho,U)\ge0$: $\langle X,Y\rangle_\rho\equiv\Tr[\rho X^\dagger Y]$ is a positive semidefinite Hermitian form on operators, so the Cauchy--Schwarz inequality $|\langle X,Y\rangle_\rho|^2\le\langle X,X\rangle_\rho\langle Y,Y\rangle_\rho$ holds. With $X=A(t)$ and $Y=BA(t)B$, both unitary, $\langle X,X\rangle_\rho=\langle Y,Y\rangle_\rho=\Tr\rho=1$ and $\langle X,Y\rangle_\rho=\Ot_\rho(A,B)$.

Upper bound in (i): for $A=\1$, $\Ot_\rho(\1,B)=\Tr[\rho B^2]=1$ for all $4^n$ strings $B$; for $B=\1$ and $A\neq\1$, $\Ot_\rho(A,\1)=\Tr[\rho A(t)^2]=1$ for the remaining $4^n-1$ strings $A$. All other terms are non-negative, so $\E_{A,B}|\Ot_\rho|^\alpha\ge(2\cdot4^n-1)/4^{2n}$.

Property (ii): since $|\Ot_\rho|\le1$, $|\Ot_\rho|^{\alpha'}\le|\Ot_\rho|^{\alpha}$, giving the first inequality. For the second, $(\E|\Oc_\rho|^\alpha)^{1/\alpha}$ is non-decreasing in $\alpha$ (Jensen's inequality for the convex function $x^\alpha \mapsto (x^\alpha)^{\alpha'/\alpha}$), so $-\frac1\alpha\log\E|\Oc_\rho|^\alpha\ge-\frac1{\alpha'}\log\E|\Oc_\rho|^{\alpha'}$.

For (iii), let $V_1,V_2$ be Clifford and set $U'=V_1UV_2$. Then $U'^\dagger AU'=V_2^\dagger\,U^\dagger A'U\,V_2$ with $A'=V_1^\dagger AV_1\in\pm\Pn$, and
\beq
\Ot_\rho^{U'}(A,B)=\Tr\big[\rho\,V_2^\dagger U^\dagger A'U V_2\,B\,V_2^\dagger U^\dagger A'UV_2\,B\big]
=\Tr\big[(V_2\rho V_2^\dagger)\,U^\dagger A'U\,B'\,U^\dagger A'U\,B'\big]
=\Ot^{U}_{V_2\rho V_2^\dagger}(A',B'),
\eeq
with $B'=V_2BV_2^\dagger\in\pm\Pn$. The maps $A\mapsto A'$, $B\mapsto B'$ are bijections of $\Pn$ up to signs, which drop out of $|\Ot|$; hence $\I_\alpha(\rho,V_1UV_2)=\I_\alpha(V_2\rho V_2^\dagger,U)$.

For (iv), let $\rho=\rho_1\otimes\rho_2$, $U=U_1\otimes U_2$ and $A=A_1\otimes A_2$, $B=B_1\otimes B_2$, the OTOC factorizes, $\Ot_\rho(A,B)=\Ot_{\rho_1}(A_1,B_1)\Ot_{\rho_2}(A_2,B_2)$, and so does the uniform average of $|\Ot_\rho|^\alpha$ over $\Pn=\mathcal P_{n_1}\times\mathcal P_{n_2}$. Taking $-\log$ gives additivity.

The non-monotonicity example in the main text is obtained from Eq.~\eqref{eq:S-parallelT} with $k=1$: $\I_\alpha(\rho,T)=-\log\big(\tfrac34+\tfrac14|\langle Z\rangle_\rho|^\alpha\big)$, which is $0$ for $\rho=|0\rangle\langle0|$ and $\log\frac43$ for $\rho=|{+}\rangle\langle{+}|$.

\end{proof}

\section{Faithfulness}\label{app:faith}

\begin{lemma}\label{lem:innerproduct}
If $\rho$ is invertible, $\langle X,Y\rangle_\rho=\Tr[\rho X^\dagger Y]$ is a positive-definite Hermitian inner product on the space of operators, and $|\langle X,Y\rangle_\rho|^2=\langle X,X\rangle_\rho\langle Y,Y\rangle_\rho$ holds iff $X$ and $Y$ are linearly dependent.
\end{lemma}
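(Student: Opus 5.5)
The plan is to reduce the statement to the standard inner-product facts on $\mathbb{C}^{d\times d}$ by absorbing $\rho$ into the operators. Since $\rho$ is a density matrix it is positive semidefinite. If it is invertible, it is positive definite, so $\rho^{1/2}$ exists and is itself invertible. Rewriting
\beq
\langle X,Y\rangle_\rho=\Tr[\rho X^\dagger Y]=\Tr\big[(X\rho^{1/2})^\dagger (Y\rho^{1/2})\big]
\eeq
by cyclicity of the trace identifies $\langle X,Y\rangle_\rho$ with the Hilbert--Schmidt inner product $\langle X\rho^{1/2},Y\rho^{1/2}\rangle_{HS}$. So everything will be pulled back from Hilbert--Schmidt along the linear map $L:X\mapsto X\rho^{1/2}$.

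The first step is sesquilinearity and Hermiticity, which follow directly from this representation. We have $\overline{\langle X,Y\rangle_\rho}=\Tr[(\rho X^\dagger Y)^\dagger]=\Tr[Y^\dagger X\rho]=\Tr[\rho Y^\dagger X]=\langle Y,X\rangle_\rho$. The second step is positive definiteness: $\langle X,X\rangle_\rho=\|X\rho^{1/2}\|_{HS}^2\ge0$, with equality iff $X\rho^{1/2}=0$. Because $\rho^{1/2}$ is invertible, this happens iff $X=0$. This is the only place where full rank enters. Without it, $L$ has a kernel, which is exactly why the paper's counterexample with $\rho=|0\rangle\langle0|$ escapes faithfulness.

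The third step is the equality case. Cauchy--Schwarz holds for any positive semidefinite form. The standard equality analysis for a positive-definite inner product then applies. If $Y=0$, the claim is trivial. Otherwise, set $\lambda=\langle Y,X\rangle_\rho/\langle Y,Y\rangle_\rho$ and expand
\beq
0\le\langle X-\lambda Y,X-\lambda Y\rangle_\rho=\langle X,X\rangle_\rho-\frac{|\langle X,Y\rangle_\rho|^2}{\langle Y,Y\rangle_\rho}.
\eeq
Equality in Cauchy--Schwarz forces the left side to vanish. Definiteness then gives $X=\lambda Y$. The converse direction, that linearly dependent $X,Y$ saturate the inequality, is immediate by substitution.

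I expect no real obstacle here. The one point needing care is the definiteness step, where invertibility of $\rho^{1/2}$, rather than merely positivity of $\rho$, is what rules out nonzero null vectors. Equivalently, one can transport the known Hilbert--Schmidt statements through the injective map $L$, since injective linear maps preserve linear (in)dependence.
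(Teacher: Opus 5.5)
Your proof is correct and follows essentially the same route as the paper. The paper also writes $\langle X,X\rangle_\rho=\Tr[(X\rho^{1/2})^\dagger(X\rho^{1/2})]$ and uses invertibility of $\rho^{1/2}$ for definiteness; it then cites the equality case of Cauchy--Schwarz as standard, which you spell out correctly via $X-\lambda Y$ with $\lambda=\langle Y,X\rangle_\rho/\langle Y,Y\rangle_\rho$.
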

\begin{proof}
Hermiticity is immediate. $\langle X,X\rangle_\rho=\Tr[(X\rho^{1/2})^\dagger(X\rho^{1/2})]\ge0$ with equality iff $X\rho^{1/2}=0$, i.e.\ iff $X=0$ when $\rho^{1/2}$ is invertible. The equality case of Cauchy--Schwarz for a positive-definite form is standard.
\end{proof}

\begin{lemma}\label{lem:separate}
For distinct $P,P'\in\Pn$ there exists $B\in\Pn$ with $\chi(P,B)\neq\chi(P',B)$.
\end{lemma}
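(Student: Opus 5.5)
We identify each Pauli string, up to phase, with a vector in $\F_2^{2n}$. For $P\leftrightarrow(x,z)$ and $B\leftrightarrow(x',z')$, the commutation sign is $\chi(P,B)=(-1)^{\omega(P,B)}$, where $\omega$ is the standard symplectic form $\omega\big((x,z),(x',z')\big)=x\cdot z'+z\cdot x'$. This form is non-degenerate. We also use the multiplicativity of the sign, $\chi(P,B)\chi(P',B)=\chi(PP',B)$; this holds because $\omega$ is bilinear and the product $PP'$ corresponds to the sum of the vectors, with phases irrelevant to commutation. So the statement reduces to the following: for the non-identity string $R\equiv PP'$ (non-identity up to phase because $P\neq P'$ as elements of $\Pn$ modulo phases), find $B$ with $\chi(R,B)=-1$.

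There are two ways to finish, and we would present the first. \emph{Direct argument:} $R$ acts as a non-identity single-qubit Pauli $\sigma\in\{X,Y,Z\}$ on some qubit $j$. Choose $B$ to be a single-qubit Pauli on qubit $j$ that differs from $\sigma$ and is not $\1$; for instance, take $B=Z_j$ if $\sigma\in\{X,Y\}$ and $B=X_j$ if $\sigma=Z$. Distinct non-identity single-qubit Paulis anticommute, and the factors on all other qubits commute trivially, so $\chi(R,B)=-1$. \emph{Alternative:} use the counting fact already quoted in the preamble of the SM, $\E_B\chi(R,B)=\delta_{R,\1}$. For $R\neq\1$ the average is $0$, so $\chi(R,B)=-1$ for some $B$.

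In either version we then have
\[
\chi(P,B)\chi(P',B)=\chi(R,B)=-1,
\]
so $\chi(P,B)\neq\chi(P',B)$.

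There is essentially no obstacle. The only point needing care is the phase convention: $\Pn$ might include phases $\pm1,\pm i$, and in that case ``distinct'' should be read modulo phase. If $P$ and $P'$ differ only by a phase, no $B$ can separate them, so the lemma is to be applied with strings taken modulo phase, consistent with how $\Pn$ is used as the index set of the uniform averages.
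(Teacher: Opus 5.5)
Your proof is correct and follows the paper's argument: reduce to the non-identity string $PP'$, exhibit a $B$ anticommuting with it, and conclude via $\chi(P,B)\chi(P',B)=\chi(PP',B)=-1$. Your explicit single-qubit choice of $B$, and your remark that "distinct" must be read modulo phases, spell out details the paper leaves implicit.
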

\begin{proof}
$PP'$ is proportional to a non-identity Pauli string, so some $B$ anticommutes with it, and $\chi(P,B)\chi(P',B)=\chi(PP',B)=-1$.
\end{proof}

\begin{proof}[Proof of Theorem~1]
If $U$ is Clifford, $A(t)=\pm P$ for some $P\in\Pn$, so $A(t)BA(t)B=PBPB=\chi(P,B)\1$ and $|\Ot_\rho(A,B)|=1$ for all $A,B$ and all $\rho$; thus $\I_\alpha=0$.

Conversely, suppose $\I_\alpha(\rho,U)=0$ with $\rho$ invertible. Since every $|\Ot_\rho|\le1$, the average can only equal one if $|\Ot_\rho(A,B)|=1$ for all $A,B$. By Lemma~\ref{lem:innerproduct} applied to the unit vectors $X=A(t)$, $Y=BA(t)B$, this means $BA(t)B=c_{AB}A(t)$ with $|c_{AB}|=1$. Conjugating once more by $B$ gives $A(t)=c_{AB}BA(t)B=c_{AB}^2A(t)$, so $c_{AB}=\pm1$: every $A(t)$ either commutes or anticommutes with every Pauli string $B$. Expand $A(t)=\sum_Pc_PP$. Then $BA(t)B=\sum_Pc_P\chi(P,B)P=\pm A(t)$ requires $\chi(P,B)$ to take the same value for all $P$ in the support of $A(t)$, for every $B$. By Lemma~\ref{lem:separate} the support therefore consists of a single string $P_A$, and $A(t)=c\,P_A$ with $c=\pm1$ by unitarity and Hermiticity. Since this holds for every $A\in\Pn$, $U$ maps the Pauli group to itself under conjugation, i.e.\ $U$ is Clifford up to a global phase.

The argument is local in $A$, which gives the second statement of Theorem~3(i): for full-rank $\rho$ and a fixed probe $A$, $\E_B|\Ot_\rho(A,B)|^2=1$ (i.e.\ $\I_2(\rho,U;A)=0$) iff $|\Ot_\rho(A,B)|=1$ for all $B$ iff $U^\dagger AU\in\pm\Pn$, by the same steps.

Without full rank the argument only shows $BA(t)B=\pm A(t)$ on the support of $\rho$. The example $\rho=|0\rangle\langle0|$, $U=T$, for which Eq.~\eqref{eq:S-parallelT} below gives $\I_\alpha=0$, shows that faithfulness can indeed fail.
\end{proof}

\section{Exact $B$-average at $\alpha=2$ and the Choi state}\label{app:Baver}

Let $A(t)=\sum_Qc_QQ$, $c_Q=c_Q(A)=\frac1d\Tr[QA(t)]\in\mathbb R$ (real because $A(t)$ and $Q$ are Hermitian), and define $m_Q\equiv\Tr[\rho\,A(t)\,Q]$. Then $BA(t)B=\sum_Qc_Q\chi(Q,B)Q$ and
\beq
\Ot_\rho(A,B)=\sum_Qc_Q\,\chi(Q,B)\,m_Q,\qquad
|\Ot_\rho(A,B)|^2=\sum_{Q,Q'}c_Qc_{Q'}\chi(Q,B)\chi(Q',B)\,m_Q\overline{m_{Q'}} .
\eeq
Averaging over $B$ with $\E_B\chi(Q,B)\chi(Q',B)=\delta_{QQ'}$ gives Eq.~\eqref{eq:alpha2} of the main text,
\begin{equation}
\E_B|\Ot_\rho(A,B)|^2=\sum_Qc_Q(A)^2\,|m_Q|^2,\qquad m_Q=\langle A(t)Q\rangle_\rho .
\label{eq:S-alpha2}
\end{equation}
(One may equally write $|\langle QA(t)\rangle_\rho|=|m_Q|$, since $\langle QA(t)\rangle_\rho=\overline{\langle A(t)Q\rangle_\rho}$.) Two identities will be used below:
\begin{equation}
\sum_Qc_Q\,m_Q=\Tr[\rho A(t)^2]=1,\qquad
\sum_Q|m_Q|^2=d\,\Tr\rho^2 ,
\label{eq:S-sumrules}
\end{equation}
the second following from the completeness relation $\sum_Q\Tr[XQ]\Tr[QY]=d\,\Tr[XY]$ applied to $X=\rho A(t)$, $Y=A(t)\rho$; it shows that the thermal amplitudes $m_Q$ are systematically larger for purer states.

At infinite temperature $m_Q=c_Q$ and $\E_B|\Ot|^2=\sum_Qc_Q^4$. The Choi state $\Choi U=(U\otimes\1)\Choi\Phi$, $\Choi\Phi=d^{-1/2}\sum_i|i\rangle^* |i\rangle$, has Pauli expectation values $\langle\!\langle U|A\otimes Q^{T}|U\rangle\!\rangle=\frac1d\Tr[U^\dagger AUQ]=c_Q(A)$, and $Q\mapsto Q^T$ is a bijection of $\Pn$ up to signs. Hence
\beq
\E_A\sum_Qc_Q(A)^4=\frac1{4^n}\sum_{P\in\mathcal P_{2n}}\langle\!\langle U|P|U\rangle\!\rangle^4=e^{-M_2(\Choi U)},
\eeq
with $M_2(\psi)=-\log\frac1{D}\sum_{P}\langle\psi|P|\psi\rangle^4$ the stabilizer 2-R\'enyi entropy on $D=4^n$ dimensions~\cite{Leone2022}. This is Eq.~\eqref{eq:choi}.

\section{The $T$-count bound via operator size}\label{app:op_size}

\emph{Normal form.} Let $U=C_{k+1}T_{q_k}C_k\cdots T_{q_1}C_1$ with Cliffords $C_j$ and $T_{q}=e^{-i\pi Z_q/8}$ (up to a phase) on qubit $q$. Since $VT_qV^\dagger=e^{-i\pi VZ_qV^\dagger/8}$, moving all Cliffords to the left gives
\beq
U=C\,e^{-i\pi P_k/8}\cdots e^{-i\pi P_1/8},\qquad C=C_{k+1}\cdots C_1,\qquad P_j=(C_j\cdots C_1)^\dagger Z_{q_j}(C_j\cdots C_1)\in\pm\Pn .
\eeq
The strings $P_j$ need not be distinct or independent.

\emph{Support.} For a Pauli strings $Q,P$: $e^{i\theta P}Qe^{-i\theta P}=Q$ if $[Q,P]=0$ and $=\cos(2\theta)Q+i\sin(2\theta)PQ$ if $\{Q,P\}=0$; in both cases the result is supported on $\{Q,PQ\}$. Since $A(t)=e^{i\pi P_1/8}\cdots e^{i\pi P_k/8}(C^\dagger AC)e^{-i\pi P_k/8}\cdots e^{-i\pi P_1/8}$ and $C^\dagger AC\in\pm\Pn$, induction over the $k$ conjugations shows that $\supp A(t)$ is contained (up to signs) in the coset $(C^\dagger AC)\cdot\langle P_1,\dots,P_k\rangle$ of the abelian group generated by the $P_j$ in $\Pn\simeq\F_2^{2n}$. Its size is $2^r$, $r=\operatorname{rank}_{\F_2}\{P_j\}\le\min(k,2n)$. Hence $N_A\le2^r\le2^k$.

\emph{Cauchy--Schwarz.} By \eqref{eq:S-sumrules}, $1=\big|\sum_{Q\in\supp A(t)}c_Qm_Q\big|^2\le N_A\sum_Qc_Q^2|m_Q|^2=N_A\,\E_B|\Ot_\rho(A,B)|^2$, using \eqref{eq:S-alpha2}. Thus $\E_B|\Ot_\rho(A,B)|^2\ge1/N_A$ for every $A$ and every $\rho$, and averaging over $A$,
\beq
e^{-\I_2(\rho,U)}=\E_A\E_B|\Ot_\rho|^2\ge\E_A\frac1{N_A}\ge2^{-r}\ge2^{-k}.
\eeq
This proves Theorem \ref{thm:tcount}; the intermediate statement $\E_B|\Ot_\rho(A,B)|^2\ge1/N_A$, is valid for every fixed $A$. Taking into account that $N_A \le 2^r \le 2^k$, covers Theorem \ref{thm:fixedA}.
Nothing in the argument depends on the rotation angle, so the same bound holds for $k$ arbitrary Pauli rotations $e^{i\theta_jP_j}$.

\emph{Refinement for independent axes.} We bound $\E_AN_A$ more carefully. Identify Pauli strings (mod signs) with vectors $a\in\F_2^{2n}$ and let $\omega(a,b)\in\F_2$ be the symplectic form, so that $\omega(a,b)=1$ iff the strings anticommute; let $v_j$ be the vector of $P_j$ and $a$ that of $C^\dagger AC$. Define $\hat T_v(S)=S\cup\{b+v:\,b\in S,\ \omega(b,v)=1\}$ for $S\subseteq\F_2^{2n}$; by the support rule above, $\supp A(t)\subseteq S^{(k)}(a)$ where $S^{(0)}=\{a\}$ and $S^{(j)}=\hat T_{v_j}(S^{(j-1)})$ (the order in which the $v_j$ are applied is immaterial for what follows). Call $I=\{i_1<\dots<i_m\}\subseteq[k]$ \emph{admissible for $a$} if $\omega\big(a+\sum_{i\in I,\,i<i_l}v_i,\ v_{i_l}\big)=1$ for $l=1,\dots,m$. By induction on $j$, $S^{(j)}(a)=\{a+\sum_{i\in I}v_i:\ I\subseteq[j]\ \text{admissible for }a\}$: the new elements produced at step $j$ are $b+v_j$ with $b=a+\sum_{i\in I}v_i$, $I\subseteq[j-1]$ admissible, and $\omega(b,v_j)=1$, i.e.\ exactly those with $I\cup\{j\}$ admissible. Hence $N_A\le\#\{I\subseteq[k]\ \text{admissible for }a\}$. For fixed $I$ with $|I|=m$, admissibility is the affine system $\omega(a,v_{i_l})=1+\sum_{i\in I,i<i_l}\omega(v_i,v_{i_l})$, $l=1,\dots,m$. If the $v_j$ are linearly independent, the map $a\mapsto(\omega(a,v_{i_1}),\dots,\omega(a,v_{i_m}))$ is surjective ($\omega$ is non-degenerate), so the system has exactly $2^{2n-m}$ solutions $a$. Therefore
\beq
\E_a\#\{I\ \text{admissible}\}=\sum_{m=0}^k\binom km2^{-m}=\Big(\frac32\Big)^k ,
\eeq
and by Jensen's inequality $\E_A N_A^{-1}\ge\big(\E_AN_A\big)^{-1}\ge(2/3)^k$, i.e.\ $\I_2(\rho,U)\le k\log\frac32$. (For dependent axes the affine systems can have $0$ or $2^{2n-\mathrm{rank}}>2^{2n-m}$ solutions, and this counting does not close; the coset bound $2^r$ then applies.)

\section{Infinite temperature tight bound}
\label{app:proof_of_infT}
Without loss of generality, let us  assume that $U = U' T$, where $T$ acts on the first qubit. We are going to show the following:
\begin{equation}
\label{eq:peeling}
    e^{-\I_2(\1/d,  U' T)} \ge  \frac{3}{4}e^{-\I_2(\1/d, U')}
\end{equation}
Then Theorem \ref{thm:tcount_tight} will follow from the repeated application of this inequality plus the fact that the value of $\I_2$ does not change if one removes a Clifford gate at the end of $U$.
\begin{proof}[Proof of (\ref{eq:peeling})]
    Lets denote $A' = U'^\dagger A U'$. Then using eq. (\ref{eq:choi})
    \begin{equation}
e^{-\I_2(\1/d,  U' T)} = \E_{A,B}|\Ot_{\1/d}(A,B)|^2=\frac{1}{4^n} \sum_{Q,A} c_Q(T^\dagger A' T)^4
\label{eq:1decomp}
\end{equation}
In turn, $A'$ can be expanded in Pauli string basis. We can separate contributions having a specific Pauli acting on the first qubit:
\begin{equation}
    A' = \sum_{Q_r} (c'_{\1 Q_r} \1 + c'_{Z Q_r} Z + c'_{X Q_r} X + c'_{Y Q_r} Y) \otimes Q_r, 
\end{equation}
where $Q_r$ is a Pauli string which acts on the rest of the qubits.
Conjugating with $T$ can be done explicitly:
\begin{equation}
    T^{\dagger} A' T = \sum_{Q_r} \left[ c'_{\1 Q_r} \1 + c'_{Z Q_r} Z + \frac{1}{\sqrt{2}}(c'_{X Q_r} + c'_{Y Q_r}) X + \frac{1}{\sqrt{2}}(c'_{Y Q_r} - c'_{X Q_r}) Y  \right] \otimes Q_r
\end{equation}
Plugging this inside eq. (\ref{eq:1decomp}) produces the following expression:
\begin{align}
    e^{-\I_2(\1/d,  U' T)} = \sum_{A,Q_r} \left[ c'^4_{\1 Q_r} + c'^4_{Z Q_r} + \frac{1}{2} c'^4_{X Q_r}  + \frac{1}{2} c'^4_{Y Q_r} + 3 c'^2_{X Q_r} c'^2_{Y Q_r}  \right] \ge \frac{3}{4}\sum_{A,Q_r} \left[ c'^4_{\1 Q_r} + c'^4_{Z Q_r} + c'^4_{X Q_r}  + c'^4_{Y Q_r}  \right] + \nonumber \\ 
    + \frac{1}{4}\sum_{A,Q_r} \left[ c'^4_{\1 Q_r} + c'^4_{Z Q_r} - c'^4_{X Q_r}  - c'^4_{Y Q_r}  \right].
    \end{align}
    The first sum is simply $e^{-\I_2(\1/d,U')}$.
The crux of the present argument is that the last sum is positive. 
Upto an overall constant, it can be rewritten as the following overlap in 4 copies of the system:
\begin{equation}
\label{eq:pauli_sums}
    \sum_{A,Q} s_1(Q) \Tr(Q^{\otimes 4} (U^\dagger)^{\otimes 4} A^{\otimes 4} U^{\otimes 4}), 
 \end{equation}
 where auxillary function $s_1(Q)$ is equal to $1$ is the first Pauli of $Q$ is $\1$ or $Z$ and $-1$ otherwise. Sum $\sum_A A^{\otimes 4}$ is a projector upto a positive constant. It can also be easily verified that $\sum_Q s_1(Q) Q^{\otimes 4}$ is a projector too, upto a positive constant. Hence, eq. (\ref{eq:pauli_sums}) is simply overlap of two projectors which is a positive number. This concludes the proof.
\end{proof}
Note that in the present argument we explicitly used the summation over $A$, this is why for fixed $A$ the tight bound is different as discussed in the main text. Also, it is tempting to try to prove the "peeling" inequality (\ref{eq:peeling}) for $\rho \neq \1/d$. However, it is possible to find a counter-example in this case. So general $\rho$ would require a different proof strategy.  

\section{Parallel $T$ gates}\label{app:Tpar}

Let $U=T^{\otimes k}\otimes\1^{\otimes(n-k)}$. Single-qubit algebra gives $T^\dagger\1T=\1$, $T^\dagger ZT=Z$, $T^\dagger XT=(X-Y)/\sqrt2$, $T^\dagger YT=(X+Y)/\sqrt2$, and one checks that for single-qubit Paulis $a,b$ the operator $(T^\dagger aT)\,b\,(T^\dagger aT)\,b$ equals $\pm\1$ for $12$ of the $16$ pairs $(a,b)$ and $\pm iZ$ for the $4$ pairs $a\in\{X,Y\}$, $b\in\{X,Y\}$. For $A=\bigotimes_ia_i$, $B=\bigotimes_ib_i$ the operator $A(t)BA(t)B$ is therefore a phase times $Z_S=\prod_{i\in S}Z_i$, where $S\subseteq[k]$ is the set of $T$-qubits on which $(a_i,b_i)$ is one of the four ``$Z$'' pairs; the qubits without $T$ gates contribute $\pm\1$. Under the uniform measure on $(A,B)$ each qubit belongs to $S$ independently with probability $1/4$, so
\begin{equation}
\E_{A,B}|\Ot_\rho(A,B)|^\alpha=\sum_{S\subseteq[k]}\Big(\frac14\Big)^{|S|}\Big(\frac34\Big)^{k-|S|}\big|\langle Z_S\rangle_\rho\big|^\alpha
=\Big(\frac34\Big)^k\sum_{S\subseteq[k]}3^{-|S|}\big|\langle Z_S\rangle_\rho\big|^\alpha ,
\label{eq:S-parallelT}
\end{equation}
which is Eq.~\eqref{eq:parallelT}. The last sum is $\ge1$ (the $S=\emptyset$ term) with equality iff $\langle Z_S\rangle_\rho=0$ for all non-empty $S$, and $\le\sum_S3^{-|S|}=(4/3)^k$ with equality iff $|\langle Z_S\rangle_\rho|=1$ for all $S$, i.e.\ iff the reduced state on the $k$ qubits is a product of $|0\rangle$'s and $|1\rangle$'s.

\emph{Proof of Theorem~3(iii).} For a fixed probe the same bookkeeping applies without the average over $A$. Take $A=X^{\otimes k}$ on the $T$-qubits (and $\1$ elsewhere). For each $T$-qubit, $b_i\in\{\1,Z\}$ contributes $\pm\1$ and $b_i\in\{X,Y\}$ contributes $\pm iZ$, so under the uniform measure on $B$ each qubit belongs to $S$ independently with probability $1/2$ and
\beq
e^{-\I_2(\rho,T^{\otimes k};X^{\otimes k})}=\E_B|\Ot_\rho(X^{\otimes k},B)|^2=2^{-k}\sum_{S\subseteq[k]}\big|\langle Z_S\rangle_\rho\big|^2 .
\eeq
Here $A(t)=2^{-k/2}\bigotimes_i(X_i-Y_i)$ has $N_A=2^k$, so at infinite temperature, or whenever $\langle Z_S\rangle_\rho=0$ for all non-empty $S$, the Cauchy--Schwarz inequality $\E_B|\Ot_\rho|^2\ge1/N_A$ is saturated and $\I_2(\rho,U;A)=k\log2$: the per-probe bound \eqref{eq:fixedAbound} is tight. This is Eq.~\eqref{eq:fixedAtight}. Averaging over $A$ instead gives $(3/4)^k$, because a fraction $1/2$ of the probes commutes with each $T$-axis and is not split.

\section{Third level of the Clifford hierarchy}\label{app:third}

Let $U^\dagger\Pn U\subseteq\mathcal C_n$. For each $A$, $V_A\equiv A(t)$ is a Clifford unitary with $V_A^2=\1$, so $V_ABV_A=\pm Q_{A,B}$ with $Q_{A,B}\in\Pn$, and $\Ot_\rho(A,B)=\pm\langle Q_{A,B}B\rangle_\rho$. Whenever $Q_{A,B}=B$ this equals $\pm1$. In the symplectic representation $V_A\leftrightarrow M_A\in\mathrm{Sp}(2n,\F_2)$, the strings fixed up to sign are $\ker(M_A-\1)$, and $M_A^2=\1$ gives $(M_A-\1)^2=M_A^2+\1=0$ over $\F_2$, so $\operatorname{im}(M_A-\1)\subseteq\ker(M_A-\1)$ and rank--nullity yields $\dim\ker(M_A-\1)\ge n$. Thus for every $A\neq\1$ at least $2^n-1$ non-identity strings $B$ give $|\Ot_\rho(A,B)|=1$. Counting as in the proof of Proposition~\ref{prop:basic}(i), and discarding the remaining non-negative terms,
\beq
\E_{A,B}|\Ot_\rho|^\alpha\ge\frac{4^n+(4^n-1)+(4^n-1)(2^n-1)}{4^{2n}}=2^{-n}\big(1+2^{-n}-4^{-n}\big),
\eeq
for every $\rho$ and $\alpha$, which is Eq.~\eqref{eq:third}.

\section{Wigner-matrix dynamics}\label{app:wigner}

Let $W$ be an $N\times N$ Wigner matrix (i.i.d. centered entries, variance $1/N$) and $U(t)=e^{-iWt}$. For deterministic $A,B$ with $\langle X\rangle\equiv\frac1N\Tr X$, Ref.~\cite{Cipolloni2024} shows, with $\varphi(t)=J_1(2t)/t$ and with very high probability,
\begin{equation}
\tfrac12\big\langle|[A(t),B]|^2\big\rangle
=\langle A^2\rangle\langle B^2\rangle\big[1-\varphi^2\big]+2\langle AB\rangle^2\varphi^2\big[\varphi(2t)-\varphi^2\big]+\langle A^2B^2\rangle\varphi^2-\langle ABAB\rangle\varphi^4+O(\varepsilon),
\end{equation}
where $\varphi\equiv\varphi(t)$ and, for Pauli strings, $\varepsilon\simeq(t^4+t)N^{-1}e^{t/N^{1/2-\delta}}$. Note that the error term has a precise probabilistic definition known as the stochastic domination and is discussed in Ref.~\cite{Cipolloni2024}, but this is beyond what we are trying to show through this example. Expanding $|[A,B]|^2=2-BABA-ABAB$ one finds $\frac12\langle|[A(t),B]|^2\rangle=1-\Ot_{\1/N}(A,B)$, and for Pauli strings $\langle A^2\rangle=\langle B^2\rangle=\langle A^2B^2\rangle=1$, $\langle AB\rangle=\delta_{AB}$, $\langle ABAB\rangle=\chi(A,B)$. Hence
\beq
\Ot_{\1/N}(A,B)=\chi(A,B)\varphi^4-2\delta_{AB}\,\varphi^2\big[\varphi(2t)-\varphi^2\big]+O(\varepsilon)\qquad(A,B\neq\1).
\eeq
The pairs with $A=B$ or with $A=\1$ or $B=\1$ are a fraction $O(4^{-n})$ of all pairs, so $\E_{A,B}|\Ot_{\1/N}|^2=\varphi(t)^8+O(\varphi^4\varepsilon+\varepsilon^2)+O(N^{-2})$, which gives Eq.~\eqref{eq:wigner}. Using $\varphi(t)=1-t^2/2+O(t^4)$ and $\varphi(t)\simeq\pi^{-1/2}t^{-3/2}\cos(2t-3\pi/4)$ for $t\gg1$, one obtains $\I_2\simeq4t^2$ at early times and an envelope $\I_2\simeq12\log t+\log\pi^4$ at intermediate times, valid as long as $\varphi(t)^4\gg\max(\varepsilon,N^{-1})$; the zeros of $J_1$ are regularized by the subleading terms.

At finite temperature, with $\rho_\beta=e^{-\beta W}/\Tr e^{-\beta W}$, Ref.~\cite{Cipolloni2024} gives $\frac12\Tr[\rho_\beta|[A(t),B]|^2]=1-\mathrm{Re}\,\Ot_{\rho_\beta}(A,B)$ (the same manipulation as above, now $\Tr[\rho BABA]=\overline{\Tr[\rho ABAB]}$) in terms of the analytic continuation $\varphi(z)$, $\varphi(i\beta)=I_1(2\beta)/\beta$; for Pauli strings $A\neq B$,
\beq
\mathrm{Re}\,\Ot_{\rho_\beta}(A,B)=\chi(A,B)\,\varphi(t)^3\,\frac{\mathrm{Re}\,\varphi(t+i\beta)}{\varphi(i\beta)}+O(\varepsilon).
\eeq
Since $|\Ot|^2\ge(\mathrm{Re}\,\Ot)^2$, this yields the upper bound $\I_2(\rho_\beta,U(t))\le-6\log|\varphi(t)|-2\log|\mathrm{Re}\,\varphi(t+i\beta)/\varphi(i\beta)|+\dots$. Writing $g(\beta)=\varphi(i\beta)$ and using that $\varphi$ is even and real on the real axis, $\mathrm{Re}\,\varphi(t+i\beta)=g(\beta)-\frac{t^2}2g''(\beta)+O(t^4)$, so at early times $\I_2(\rho_\beta,U(t))\le C(\beta)t^2+O(t^4)$ with
\beq
C(\beta)=3+\frac{g''(\beta)}{g(\beta)}=7-\frac{6\,I_0(2\beta)}{\beta\,I_1(2\beta)}+\frac{6}{\beta^2},
\eeq
which decreases from $C(\infty)=7$ to $C(0)=4$, matching the infinite-temperature coefficient.

\end{document}